\newtheorem{definition}{Definition}
\newtheorem*{itheorem}{Theorem}
\newtheorem{theorem}{Theorem}
\newtheorem{lemma}{Lemma}
\newtheorem{corollary}{Corollary}
\newtheorem{claim}{Claim}
\newtheorem{remark}{Remark}
\Crefname{theorem}{Theorem}{Theorems}
\Crefname{section}{Section}{Sections}
\begin{document}
%
\title{Strong Chain Rules for Min-Entropy under Few Bits Spoiled}

\author{
\IEEEcompsocitemizethanks{\IEEEcompsocthanksitem supported by}
\IEEEauthorblockN{Maciej Sk\'{o}rski}
\IEEEauthorblockA{IST Austria\\
Email: mskorski@ist.ac.at}\\
}\thanks{Supported by}


%


\maketitle

\begin{abstract}
It is well established that the notion of min-entropy fails to satisfy the \emph{chain rule} of the form 
$H(X,Y) = H(X|Y)+H(Y)$, known for Shannon Entropy. 
The lack of a chain rule causes a lot of technical difficulties, particularly in cryptography 
where the chain rule would be a natural way to analyze how min-entropy is split among smaller blocks. Such problems arise
for example when constructing extractors and dispersers.

We show that any sequence of variables exhibits
a very strong strong block-source structure (conditional distributions of blocks are nearly flat) when we \emph{spoil few correlated bits}.
This implies, conditioned on the spoiled bits, that \emph{splitting-recombination properties} hold.
In particular, we have many nice properties that min-entropy doesn't obey in general, for example
strong chain rules, ``information can't hurt'' inequalities, equivalences of average and worst-case conditional entropy definitions 
and others. 

Quantitatively, for any sequence $X_1,\ldots,X_t$ of random variables over an alphabet $\mathcal{X}$
we prove that, when conditioned on $m = t\cdot O( \log\log|\mathcal{X}| + \log\log(1/\epsilon) + \log t)$ bits of auxiliary information,
all conditional distributions of the form $X_i|X_{<i}$ are $\epsilon$-close to be nearly flat (only a constant factor away).
The argument is combinatorial (based on simplex coverings).

This result may be used as a generic tool for \emph{exhibiting block-source structures}.
We demonstrate this by reproving the fundamental converter due to Nisan and Zuckermann (\emph{J. Computer and System Sciences, 1996}), 
which shows that sampling blocks from a min-entropy source roughly preserves the entropy rate.
Our bound implies, only by straightforward chain rules, an additive loss of $o(1)$ (for sufficiently many samples), which qualitatively meets 
the first tighter analysis of this problem due to Vadhan (\emph{CRYPTO'03}), obtained by large deviation techniques.
\end{abstract}
\vspace{0.5cm}
\begin{IEEEkeywords}
chain rule, min-entropy, spoiling knowledge, block sources, local extractors
\end{IEEEkeywords}


%
\IEEEpeerreviewmaketitle

\section{Introduction}

\subsection{Strong vs Weak Entropy Chain Rules}
One of the most useful properties of Shannon entropy is the chain rule, 
showing how entropy splits between distributions
\begin{align}\label{eq:chain_rule}
 H_1(X|Y) = H(X,Y) - H(Y).
\end{align}
The notion of min-entropy, very important for cryptography~\cite{DBLP:conf/icalp/Shaltiel11}, fails to satisfy this property~\cite{DBLP:conf/icits/IwamotoS13}. 
In the lack of a chain rule, a much weaker one-sided bound (e.g. of the form
$ \widetilde{H}_{\infty}(X|Y) \geqslant H_{\infty}(X)-H_{0}(Y)$, where $\tilde{H}_{\infty}$ is an appropriate extension of min-entropy to conditional distributions)
is sometimes used~\cite{DBLP:journals/siamcomp/DodisORS08}, which we address as a \emph{weak chain rule}\footnote{In leakage-resilient cryptography such bounds are simply called chain rules.
In this paper we discuss chain rules in a strong sense.}.

\subsection{Need for Strong Chain Rules}

While the weak chain rule suffices for many applications related to bounded leakage
~\cite{DBLP:journals/siamcomp/DodisORS08,DBLP:conf/focs/DziembowskiP08} (where $Y$ is leakage much shorter than the amount of min-entropy in $X$),
it is insufficient where one needs to estimate how entropy is distributed among blocks. In these settings,
one would like to argue that (roughly) either $X$ or $Y|X$ has high min-entropy if the joint min-entropy (of $(X,Y)$) is high. 
Examples of such problems are randomness extraction in the bounded storage model~\cite{DBLP:conf/eurocrypt/Cachin97}, constructions of dispersers~\cite{Barak:2006:DSE:1132516.1132611}, 
or oblivious transfer protocols~\cite{DBLP:conf/crypto/DamgardFRSS07}.





\subsection{Our Contribution and Related Works}

Although the chain rule fails in general, we show that it is true conditioned on \emph{few spoiled bits}.
We actually show more, that (locally, conditioned on auxiliary bits) a very strong \emph{block-source structure} exists. Namely, each block
is nearly flat given previous blocks. Informally, the theorem reads as follows
\begin{itheorem}[Informal: exhibiting flat block-source structures]
For any sequence $X=X_1,\ldots,X_t$ of correlated random variables each over $\mathcal{X}$ and any $\epsilon$
there exists auxiliary information $S$ of length $m$ bits such that
\begin{enumerate}[(a)]
 \item $S$ is short: $m=t\cdot O(\log\log|\mathcal{X}|+\log\log(1/\epsilon) + \log t)$
 \item Conditioned on $S$, conditional block distributions 
$\Pr[X_i|X_{i}=x_{i},X_{i-1}=x_{i-1},\ldots,X_1=x_1]$ are nearly flat ($\epsilon$-close to
a probability distribution whose values differ by a constant factor).
\end{enumerate}
\end{itheorem}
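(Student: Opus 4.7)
The plan is to build the spoiling information $S = (S_1,\dots,S_t)$ one block at a time, where $S_i$ is a short tag disclosing the rough order of magnitude of the conditional probability $\Pr[X_i = x_i \mid X_{<i} = x_{<i}]$. Because $S_1,\dots,S_{i-1}$ are then deterministic functions of $X_{<i}$, conditioning on all of $S$ jointly with $X_{<i}$ reduces to conditioning on $X_{<i}$ together with the current tag $S_i$, and it is precisely this last tag that partitions the conditional distribution of $X_i$ into nearly flat pieces. Geometrically this is a simplex cover: each block's conditional law is represented by disclosing which nearly flat piece of the probability simplex over $\mathcal{X}$ the outcome belongs to.

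For a single random variable $X$ on $\mathcal{X}$ with distribution $p$, I would partition the support into geometric bins
\[
B_k = \{x \in \mathcal{X} : 2^{-(k+1)} \le p(x) < 2^{-k}\}, \qquad k = 0, 1, \dots,
\]
and collect all outcomes with $p(x) < \epsilon/|\mathcal{X}|$ into a single trash bin of total mass at most $\epsilon$. The nontrivial bins have at most $\log(|\mathcal{X}|/\epsilon)$ distinct indices, so the bin tag for $X$ fits in $\log\log(|\mathcal{X}|/\epsilon) + O(1)$ bits, which matches the per-block budget $O(\log\log|\mathcal{X}| + \log\log(1/\epsilon))$. Within any non-trash bin, probabilities differ by a factor at most two, so conditioning on the bin yields a distribution that is nearly flat in the constant-factor sense demanded by the theorem. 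This is the combinatorial core of the argument.

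To lift this to the full sequence, at step $i$ I apply the single-variable partition to the conditional law $\Pr[X_i = \cdot \mid X_{<i} = x_{<i}]$ and let $S_i(x_1,\dots,x_i)$ be the bin index of $x_i$ with respect to that conditional. Setting the per-block parameter to $\epsilon' = \epsilon/t$ and union-bounding trash events across the $t$ blocks produces a global error $\epsilon$ and a total length $t \cdot O(\log\log|\mathcal{X}| + \log\log(1/\epsilon) + \log t)$, where the additional $\log t$ absorbs the error scaling and mild encoding overhead. The main obstacle I expect is making precise how per-step trash events propagate to the joint distribution, since naive conditioning on a rare event can blow up total variation: the cleanest device is a coupling, producing a modified process that at each step resamples the next block from its non-trash conditional given the history, so that (i) under the modified process $X_i \mid X_{<i}, S$ is exactly nearly flat, and (ii) the modified and original joint distributions are within $t\epsilon' = \epsilon$ in statistical distance. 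A subtler conceptual point is that each $S_i$ must be defined as a function of the entire tuple $(X_1,\dots,X_i)$ rather than $X_i$ alone, because the bins at step $i$ are themselves determined by the realized history; this history-dependence of the binning is precisely what upgrades the blockwise flatness into strong chain rules downstream.
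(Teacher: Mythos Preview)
Your construction is correct and is at bottom the same idea as the paper's: quantize the surprise vector $(-\log p(x_1),\,-\log p(x_2\mid x_1),\,\ldots,\,-\log p(x_t\mid x_{<t}))$ and let $S$ be the cell label. The paper packages this geometrically: it trashes \emph{globally} (discarding $x$ with $p(x)<\epsilon/|\mathcal{X}|^t$), observes that the surviving surprise vectors satisfy $\sum_i r^i(x)\le t\log|\mathcal{X}|+\log(1/\epsilon)$ and hence lie in a $(t{-}1)$-simplex, and then covers that simplex by $\ell_\infty$-balls of radius $\delta$, taking the preimages as the parts. Your version trashes and bins each coordinate separately, which is simply the product (grid) covering of the ambient box rather than of the simplex. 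At the level of the crude covering bound both routes give the stated $t\cdot O(\log\log|\mathcal{X}|+\log\log(1/\epsilon)+\log t)$, and your presentation is arguably more elementary; the simplex framing buys the paper only the option of exploiting sharper simplex covering numbers in later refinements, as the authors themselves remark.

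One small caution: your claim that ``conditioning on all of $S$ jointly with $X_{<i}$ reduces to conditioning on $X_{<i}$ together with the current tag $S_i$'' is not literally correct, since $S_{i+1},\ldots,S_t$ depend on $X_{\geq i}$ and can tilt the posterior of $X_i$. Fortunately the formal statement (part~(c) of the theorem) only asks that the \emph{unconditioned} ratios $p(x_i\mid x_{<i})/p(x'_i\mid x'_{<i})$ be bounded for $x,x'$ in the same part, and your coordinate bins deliver that directly without any appeal to that reduction; so the slip is in the explanation, not in the construction.
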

The formal statement is given in \Cref{thm:chain_rule}.
For cryptographic applications, $\log\log(\epsilon^{-1})$ is pretty much a small constant (typically $\epsilon = 2^{-100}$). 
Also, for sources with super-logarithmic entropy per block, that is when $H_{\infty}(X) \gg t \log \log|\mathcal{X}|)$,
and the number of blocks $t$ growing not too fast, e.g. $t = \log^{O(1)}(|\mathcal{X}|)$, 
the error term is of a smaller order than the entropy. Under these mild assumptions, conditioned on the partition generated
by auxiliary bits, we conclude
many nice properties that fail in general. Examples are chain rules, ``conditioning only decreases entropy'' properties,
equivalences of conditional entropy defined in different ways and others.


\subsubsection{Our Tools}

\paragraph{Spoiling Knowledge}
The \emph{spoiling knowledge} technique is essentially about finding auxiliary information that increases entropy, and
was introduced in \cite{DBLP:journals/tit/BennettBCM95}. We use the same idea to force block distributions to be nearly flat.

\paragraph{Covering techniques (combinatorial geometry)}
In order to construct a good ``spoiling'', we consider the logarithm of the chaining identity 
$p_{X_1\ldots X_t}(\cdot) = \prod_{j=1}^{t}p_{X_i|X_{i-1}\ldots X_1}(\cdot)$ which 
represents the ``surprise'' of the total distribution as a sum of ``next-block surprises''
of the form $r_i(\cdot) = -\log p_{X_i|X_{i-1}\ldots X_1}(\cdot)$. It follows
that the vector of all $r_i$ (for $i=1,\ldots,t$) lie in a $(t-1)$-dimensional simplex
of edge roughly $O(t\log|\mathcal{X}|)$. Our partition is obtained from \emph{coverings}, as 
for all $x$ in the same part $r_i(\cdot)$ is roughly constant (when the radius is small enough).
This approach can be likely optimized (we use a crude bound on the covering number). Also the lower bounds on the necessary number of spoiled buts are possible, by considering packings instead of coverings.
We defer this discussion to the full version.



\subsection{Related works}
As far as we know, the presented result on spoiling min-entropy is knew.
The Nisan-Zuckerman lemma, discussed in this paper as an application,
was analyzed by Vadhan~\cite{DBLP:conf/crypto/Vadhan03} and recently by Bellare~\cite{DBLP:conf/crypto/BellareKR16}. 
These results study min-entropy present in random chunks of a larger source,
and don't offer tools for \emph{splitting entropy} in any source \emph{deterministically} (particularly for a small number of blocks),
as we do.

\subsection{Applications}

The important result due to Nisan and Zuckerman~\cite{DBLP:journals/jcss/NisanZ96}, improved later by Vadhan~\cite{DBLP:conf/crypto/Vadhan03} 
states that sampling from a given source of high min-entropy rate $\alpha$ yields
a source of a comparable entropy rate $\beta$. This fact is a crucial step in constructions
of so called \emph{local extractors}, that extract randomness parsing only a part of input. 
For a while, only a lossy bound $\beta \approx \alpha/\log(1/\alpha)$ was known.
The reason was precisely the lack of a chain rule for min-entropy.
As observed by Cachin~\cite{DBLP:conf/eurocrypt/Cachin97} the proof for Shannon entropy (a less interesting case) is straightforward and follows basically by a splitting-recombining argument, which uses a chain rule in both directions.
We demonstrate by our technique that (surprisingly) a very effective splitting-recombining approach actually works,
and achieves $\beta = \alpha- o(1)$ in a very straightforward way. This matches the bound due to Vadhan.
Concretely, if the original source is a sequence of $t$ blocks over an alphabet $\mathcal{X}$ and we take $\ell$ samples, then
$\beta = \alpha - err_{\mathsf{Spoil}}  -err_{\mathsf{Samp}}$
where the losses due to chain rules and sampling equal, respectively 
\begin{align*}
err_{\mathsf{Spoil}} &= O(\log\log|\mathcal{X}|+\log\log(1/\epsilon)+\log t)/\log|\mathcal{X}| \\ 
err_{\mathsf{Samp}} &= O(\sqrt{\ell^{-1}\log(1/\epsilon)}).
\end{align*}
In particular $\beta$ converges to $\alpha$ when the block length is $\log|\mathcal{X}| = \log^{\omega(1)}(1/\epsilon)$ and 
$\log |\mathcal{X}| = \omega(1)\cdot t $.
For more details see \Cref{lemma:sampling} in \Cref{sec:app}.
For the discussed result our bounds converge slightly slower than Vadhan's bounds derived by large deviation techniques.
However, our spoiling technique can be used also for small number of samples.

\subsection{Organization}

In \Cref{sec:prelim} we explain necessary notions and notations. Auxiliary facts that will be needed are discussed
in \Cref{sec:lemmas}. In \Cref{sec:main} we prove the main result.
Applications to the bounded storage model are discussed in \Cref{sec:app}. 
We conclude the work in \Cref{sec:conclusion}.

\section{Preliminaries}\label{sec:prelim}

\subsection{Basic Notation}
For any random variables $X_1,X_2$ by $p_{X_1|X_2}$ we denote the distribution of $X_1$ conditioned on $X_2$, that is
 $p_{X_1|X_2}(x_1,x_2) = \Pr[X_1=x_1|X_2=x_2]$. 
 Throughout this paper, all logarithms are taken to base $2$.
 For any sequence of random variables $X = X_1,\ldots,X_n$
we denote $X_{<i}=X_{1},\ldots,X_{i-1}$,  $X_{\leqslant i}=X_{1},\ldots,X_{i}$
and more generally for any subset $I\subset \{1,\ldots,n\}$ we put $X_{I} = X_{i_1}X_{i_2}\ldots X_{i_{m-1}}X_{i_{m}}$ where $i_1<i_2<\ldots <i_m$ are all elements of $I$.

\subsection{Distances, Entropies}
In the definitions below $\mathcal{X}$ is an arbitrary finite set.
\begin{definition}[Statistical Distance]
For two random variables $X,Y$ on $\mathcal{X}$ by
the statistical distance (total variation) we mean
\begin{align*}
 d_{TV}(X;Y) = \frac{1}{2}\sum_{x\in\mathcal{X}}| \Pr[X=x] - \Pr[Y=x]|
\end{align*}
\end{definition}


\begin{definition}[Shannon Entropy]
The Shannon entropy of a random variable $X$ on $\mathcal{X}$ equals
\begin{align*}
 H_{1}(X) = -\sum_{x}\Pr[X=x]\log  \Pr[X=x].
\end{align*}
\end{definition}

\begin{definition}[Min-Entropy]
The min-entropy of a random variable $X$ on $\mathcal{X}$ equals
\begin{align*}
 H_{\infty}(X) = -\log \max_{x\in\mathcal{X}} \Pr[X=x].
\end{align*}
\end{definition}

\begin{definition}[Conditional Min-Entropy~\cite{DBLP:journals/siamcomp/DodisORS08}]
Let $X,Y$ be random variables over $\mathcal{X}$ and $\mathcal{Y}$ respectively.
The \emph{worst-case} min-entropy of $X$ conditioned on $Y$ equals
\begin{align*}
 {H}_{\infty}(X|Y) = \min_{y\in\mathcal{Y}} H_{\infty}(X|Y=y).
\end{align*}
The \emph{average} min-entropy of $X$ conditioned on $Y$ equals
\begin{align*}
 \widetilde{H}_{\infty}(X|Y) = -\log \left(\mathbb{E}_{y\gets Y} 2^{-H_{\infty}(X|Y=y)} \right).
\end{align*}
\end{definition}
\begin{remark}
The averaged notion is slightly weaker, but has better properties and actually better suits applications~\cite{DBLP:journals/siamcomp/DodisORS08}.
\end{remark}
The notion of smooth entropy is more accurate than min-entropy because quantifies entropy up to small perturbations in the probability mass.
\begin{definition}[Smooth Min-Entropy~\cite{WolfRenner2004}]
The \emph{$\epsilon$-smooth min-entropy} of a random variable $X$ on $\mathcal{X}$ is defined as
\begin{align*}
  H_{\infty}^{\epsilon}(X)  = \max_{X': d_{TV}(X,X')\leqslant \epsilon} H_{\infty}(X')
\end{align*}
where the maximum is over all random variables $X'$ on $\mathcal{X}$. In other words, $X$ has at least $k$ bits of \emph{smooth min-entropy}
if there is $X'$ of min-entropy at least $k$ and $\epsilon$-close to $X$.
\end{definition}


\subsection{Randomness Extractors}

Below we recall the definition of seeded extractors
\begin{definition}[Randomness Extractor~\cite{DBLP:journals/jcss/NisanZ96}]
We say that a function $\mathsf{Ext}:\{0,1\}^n\times\{0,1\}^d \rightarrow \{0,1\}^m$
is a $(k,\epsilon)$-extractor if and only if
\begin{align*}
 d_{TV}(\mathsf{Ext}(X,U_d),U_d ; U_m,U_d) \leqslant \epsilon
\end{align*}
for any $X$ on $\{0,1\}^n$ with min-entropy at least $k$.
\end{definition}

\subsection{Samplers}

Averaging samplers are procedures which sample points, within a given domain, that are distributed enough random to 
approximate every function. It turns out that there exist good averaging samples, using
much less auxiliary randomness than necessary to generate independent points.
For our applications we consider samplers that approximate averages from below.
\begin{definition}[Averaging Samplers~\cite{DBLP:conf/crypto/Vadhan03}]\label{def:samplers}
A function $\mathsf{Samp}:\{0,1\}^d \rightarrow [t]^{\ell}$ is a $(\mu,\theta,\gamma)$-averaging sampler if for every function $f:[t]\rightarrow [0,1]$
with average value $\frac{1}{\ell}\sum_{i=1}^{\ell}f(i) \geqslant \mu$ it holds that
\begin{align*}
 \Pr_{i_{1},\ldots,i_{\ell}\gets \mathsf{Samp}(U_r)}\left[ \frac{1}{\ell}\sum_{i=1}^{\ell}f(i) \geqslant \mu - \theta \right] \leqslant 1-\gamma.
\end{align*}
\end{definition}

\begin{lemma}[Optimal Averaging Samplers (Nonconstructive)~\cite{DBLP:conf/crypto/Vadhan03}]
For the setting in \Cref{def:samplers}
there is an averaging sampler which for any $\ell'$ such that
$\Omega(\mu\theta^{-2}\log(1/\gamma) ) \leqslant \ell' \leqslant \ell$
produces $\ell'$ distinct samples and uses
$d = \log(t/\ell') + \log(1/\gamma) + 2\log(\mu/\theta) + \log\log(1/\mu)+O(1)$ random bits.
\end{lemma}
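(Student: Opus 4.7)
The plan is a probabilistic existence argument. I would first reduce from general $[0,1]$-valued test functions to boolean indicators $f=\mathbf{1}_A$ for sets $A\subseteq [t]$ with $|A|/t \geqslant \mu$: via the layer-cake decomposition $f(i)=\int_0^1 \mathbf{1}[f(i)\geqslant u]\,du$, a sampler that accurately estimates the density of every sufficiently large set $A$ automatically estimates the average of every $[0,1]$-valued $f$ with mean $\geqslant \mu$, so it suffices to handle the boolean case.

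Next I would consider a purely random construction: assign each of the $2^d$ seeds to an independent, uniformly random $\ell'$-tuple of distinct elements of $[t]$. For a fixed set $A$ with density $\mu_A\geqslant \mu$, standard concentration for sampling without replacement (Serfling/Hoeffding) gives $\Pr_s\!\left[\bar A_s < \mu_A - \theta\right] \leqslant \exp\!\left(-\Omega(\ell'\theta^2/\mu)\right)$ for each seed $s$, where $\bar A_s$ is the empirical density of $A$ on the sampled tuple. A second Chernoff bound, now over the $2^d$ independent seeds, shows that with overwhelming probability the fraction of seeds that undershoot for $A$ is below $1-\gamma$, provided the per-seed failure probability is comfortably smaller than $1-\gamma$; this is where the term $\ell' = \Omega(\mu\theta^{-2}\log(1/\gamma))$ enters.

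The main obstacle is the union bound over all candidate sets $A$: there are roughly $2^{tH(\mu)}$ of them, which would naively force $d$ to scale with $t$ and destroy the claimed bound. The clean remedy is the tight sampler--extractor equivalence of Zuckerman: a strong $(k,\epsilon)$-extractor $\mathsf{Ext}\colon\{0,1\}^n\times\{0,1\}^d\to\{0,1\}^m$ yields an averaging sampler with output length $\ell'\sim 2^m$, sample space $[t]\sim\{0,1\}^n$, density threshold $\mu$, accuracy $\theta$ and success rate $\gamma$ tied to $\epsilon$. Near-optimal strong extractors exist with seed length $d=\log(n-k)+2\log(1/\epsilon)+O(1)$ via the Radhakrishnan--Ta-Shma probabilistic argument, whose union bound is carried out not over all $2^{tH(\mu)}$ sets but over the much smaller structural set of \emph{flat} min-entropy sources and their convex combinations. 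Translating back with the identifications $n=\log t$, $2^k\sim \ell'$, $\epsilon\sim\gamma\theta/\mu$ yields precisely $d=\log(t/\ell')+\log(1/\gamma)+2\log(\mu/\theta)+\log\log(1/\mu)+O(1)$, as claimed. The $\log\log(1/\mu)$ slack is exactly the overhead for handling the threshold discretization in the reduction from $[0,1]$-valued to boolean functions.
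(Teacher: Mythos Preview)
The paper does not prove this lemma at all: it is stated in the Preliminaries section as a known result imported from Vadhan~\cite{DBLP:conf/crypto/Vadhan03}, with no accompanying proof or proof sketch. There is therefore nothing in the paper to compare your proposal against.

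As for the proposal itself, the high-level route via the Zuckerman sampler--extractor correspondence combined with the Radhakrishnan--Ta-Shma nonconstructive extractor bound is a legitimate way to obtain near-optimal averaging samplers, and is in the spirit of how such results are usually derived. Two places deserve more care, though. First, your layer-cake reduction is stated too loosely: if $f$ has mean $\geqslant \mu$, its individual level sets $\{f\geqslant u\}$ need not all have density $\geqslant \mu$, so ``handles every set of density $\geqslant \mu$'' does not immediately give ``handles every $[0,1]$-valued $f$ of mean $\geqslant \mu$.'' The standard fix is a geometric discretization of the range into $O(\log(1/\mu))$ buckets and a separate accounting for small-density levels; this is indeed the source of the $\log\log(1/\mu)$ term, but you should spell it out rather than assert it. Second, the final parameter identification $\epsilon\sim\gamma\theta/\mu$ and the claim that this yields exactly $d=\log(t/\ell')+\log(1/\gamma)+2\log(\mu/\theta)+\log\log(1/\mu)+O(1)$ is asserted rather than derived; the extractor-to-sampler translation involves several choices (strong vs.\ weak, one-sided error, distinct samples) and getting the additive constants and the $\log\log$ term to line up requires an explicit calculation, not just ``$\sim$''. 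None of this is fatal, but as written the proposal is an outline rather than a proof.
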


\section{Auxiliary Lemmas}\label{sec:lemmas}

The following lemma is essentially the \emph{information can't hurt} principle, well known for Shannon entropy, stated for the notion of min-entropy. 
It says that extra information can only reduce the uncertainty, and comes up often in cryptography.
\begin{lemma}[Conditioning only reduces entropy~\cite{DBLP:conf/icits/IwamotoS13}]\label{lemma:conditioning_reduces_entropy}
For any $X,Y,Z$ we have $\widetilde{H}_{\infty}(X|Y,Z) \leqslant \widetilde{H}_{\infty}(X|Y)$.
\end{lemma}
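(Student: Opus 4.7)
The plan is to unpack the definition of $\widetilde{H}_{\infty}$ and reduce the bound to a simple pointwise inequality on conditional probabilities. Writing $2^{-H_{\infty}(X|Y=y)} = \max_x p_{X|Y}(x,y)$, the inequality $\widetilde{H}_{\infty}(X|Y,Z) \leqslant \widetilde{H}_{\infty}(X|Y)$ is equivalent (after applying $-\log$ and flipping direction) to
\[\mathbb{E}_{(y,z)\gets (Y,Z)} \max_{x} p_{X|YZ}(x,y,z) \;\geqslant\; \mathbb{E}_{y\gets Y} \max_{x} p_{X|Y}(x,y).\]
So I would switch to this \emph{collision-style} formulation at the start and work with it throughout.

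Next, I would fix $y$ in the support of $Y$ and argue pointwise. Since $\max_{x}$ upper-bounds each individual $p_{X|YZ}(x,y,z)$, interchanging $\max$ and expectation gives
\[\mathbb{E}_{z\gets Z|Y=y} \max_{x} p_{X|YZ}(x,y,z) \;\geqslant\; \max_{x} \mathbb{E}_{z\gets Z|Y=y} p_{X|YZ}(x,y,z).\]
By the law of total probability the inner expectation on the right equals $p_{X|Y}(x,y)$, so this says
\[\mathbb{E}_{z\gets Z|Y=y} \max_{x} p_{X|YZ}(x,y,z) \;\geqslant\; \max_{x} p_{X|Y}(x,y).\]

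Finally, I would average this pointwise inequality over $y \gets Y$. The tower property turns the iterated expectation on the left into the joint expectation $\mathbb{E}_{(y,z)\gets(Y,Z)} \max_x p_{X|YZ}(x,y,z)$, and the right side becomes $\mathbb{E}_{y\gets Y} \max_x p_{X|Y}(x,y)$. That is exactly the reformulated inequality from the first paragraph; applying $-\log$ recovers the claim. The only step beyond bookkeeping is the ``$\mathbb{E}\max \geqslant \max\mathbb{E}$'' swap, which is immediate because the pointwise maximum dominates every fixed choice of $x$; accordingly I do not expect any real obstacle, and the proof should fit in a few lines.
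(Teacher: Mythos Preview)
Your proof is correct and is the standard argument for this fact. Note, however, that the paper does not actually prove this lemma: it is stated as an auxiliary fact with a citation to \cite{DBLP:conf/icits/IwamotoS13} and no proof is given. So there is no ``paper's own proof'' to compare against; your argument would serve as a self-contained replacement for the omitted proof.
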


It is well known (see, e.g., \cite{DBLP:conf/ches/BarakST03}) that when each block has certain min-entropy entropy conditioned on previous blocks, the total entropy
grows linearly with the number of blocks (as expected). The lemma is stated without a proof (it follows directly from definitions).
We stress, however, that using the \emph{worst-case} notion of conditional entropy is crucial. 
\begin{lemma}[Min-entropy from block sources]\label{lemma:chaining_entropy}
For any $X,Y,Z$ 
we have 
${H}_{\infty}(X,Y|Z) \geqslant {H}_{\infty}(X|Y,Z)+{H}_{\infty}(Y|Z)$.
\end{lemma}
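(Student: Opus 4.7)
\medskip
\noindent\textbf{Proof sketch.} The plan is to unfold the definition of worst-case conditional min-entropy and reduce the inequality to a pointwise multiplicative inequality that follows from the probabilistic chain rule $\Pr[X=x,Y=y\mid Z=z]=\Pr[X=x\mid Y=y,Z=z]\cdot\Pr[Y=y\mid Z=z]$.

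\smallskip
\noindent First I would restate the three quantities in the lemma as
\begin{align*}
 2^{-H_{\infty}(X,Y|Z)} &= \max_{z}\max_{x,y}\Pr[X=x,Y=y\mid Z=z],\\
 2^{-H_{\infty}(X|Y,Z)} &= \max_{y,z}\max_{x}\Pr[X=x\mid Y=y,Z=z],\\
 2^{-H_{\infty}(Y|Z)}   &= \max_{z}\max_{y}\Pr[Y=y\mid Z=z],
\end{align*}
so that the claimed inequality becomes, after taking $2^{-(\cdot)}$ and reversing the sense,
\begin{equation*}
\max_{z,x,y}\Pr[X=x,Y=y\mid Z=z]\;\leqslant\;2^{-H_{\infty}(X|Y,Z)}\cdot 2^{-H_{\infty}(Y|Z)}.
\end{equation*}

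\smallskip
\noindent Next, for any fixed $(x,y,z)$ I would apply the probability chain rule and then bound each factor separately by the appropriate global maximum:
\begin{align*}
 \Pr[X=x,Y=y\mid Z=z] &= \Pr[X=x\mid Y=y,Z=z]\cdot \Pr[Y=y\mid Z=z] \\
 &\leqslant \Bigl(\max_{x',y',z'}\Pr[X=x'\mid Y=y',Z=z']\Bigr)\cdot\Bigl(\max_{y',z'}\Pr[Y=y'\mid Z=z']\Bigr) \\
 &= 2^{-H_{\infty}(X|Y,Z)}\cdot 2^{-H_{\infty}(Y|Z)}.
\end{align*}
Taking the maximum over $z,x,y$ on the left-hand side and then $-\log$ of both sides yields exactly the statement.

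\smallskip
\noindent The argument is essentially a one-line pointwise estimate, so there is no real obstacle; the only thing to be careful about is that the lemma uses the \emph{worst-case} notion of conditional min-entropy on both sides, which is precisely what allows us to pull the maxima out of the product. The analogous statement for the averaged variant $\widetilde{H}_{\infty}$ would not follow in this clean way, since Jensen-type losses appear when one replaces maxima by expectations, and this is exactly the asymmetry alluded to in the remark preceding the lemma.
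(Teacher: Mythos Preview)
Your proof is correct and is exactly the direct-from-definitions argument the paper has in mind; in fact the paper states the lemma without proof, remarking only that ``it follows directly from definitions,'' which is precisely what your pointwise chain-rule bound does. Your closing observation that the worst-case (rather than averaged) conditioning is what makes the argument clean is also on point and matches the paper's emphasis.
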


Another lemma, well known in folklore, quantifies the intuition that conditioning on ``large'' events (not too surprising information) doesn't decrease entropy too much.
\begin{lemma}[Leakage lemma for min-entropy]\label{lemma:leakage}
For any random variable $X$ and any events $A,B$ we have $H_{\infty}(X|B,A) \geqslant H_{\infty}(X|B)-\log(1/\Pr(A|B))$. 
\end{lemma}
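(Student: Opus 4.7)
The plan is to unwind the definitions and use the elementary identity $\Pr[X=x \mid A,B] = \Pr[X=x, A \mid B]/\Pr[A \mid B]$. First I would fix an arbitrary $x$ in the support of $X$ and observe the pointwise inequality
\begin{align*}
  \Pr[X=x \mid A, B] \;=\; \frac{\Pr[X=x, A \mid B]}{\Pr[A \mid B]} \;\leqslant\; \frac{\Pr[X=x \mid B]}{\Pr[A \mid B]},
\end{align*}
which simply uses that conjunction with the event $A$ can only shrink probability mass (in the numerator).

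Next I would take the supremum over $x$ on both sides. Since the denominator $\Pr[A \mid B]$ does not depend on $x$, we obtain
\begin{align*}
  \max_{x} \Pr[X=x \mid A, B] \;\leqslant\; \frac{\max_{x}\Pr[X=x \mid B]}{\Pr[A \mid B]}.
\end{align*}
Applying $-\log(\cdot)$ (which is monotonically decreasing) to both sides and recognizing the definition of $H_\infty(X \mid A,B)$ on the left and $H_\infty(X \mid B)$ on the right yields
\begin{align*}
  H_\infty(X \mid A, B) \;\geqslant\; H_\infty(X \mid B) - \log\bigl(1/\Pr[A \mid B]\bigr),
\end{align*}
which is exactly the claim.

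There is essentially no obstacle here; the lemma is purely a consequence of Bayes' rule together with monotonicity of $-\log$, and the ``leakage'' intuition is just the observation that conditioning on an event of probability $p$ can blow up individual point masses by a factor of at most $1/p$. The only thing to be mildly careful about is that $\Pr[A \mid B] > 0$, which is implicit in the statement since otherwise the conditional $H_\infty(X \mid A, B)$ is undefined.
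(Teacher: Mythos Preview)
Your proof is correct and follows essentially the same approach as the paper: both establish the pointwise bound $\Pr[X=x\mid A,B]\leqslant \Pr[X=x\mid B]/\Pr[A\mid B]$ via Bayes' rule and the monotonicity $\Pr[X=x,A\mid B]\leqslant \Pr[X=x\mid B]$, then pass to the maximum over $x$ and take $-\log$. The paper's version is terser (it stops after the pointwise inequality), but the argument is identical.
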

\begin{proof}
The first part of the lemma follows because we have $\Pr[X=x|A,B] = \Pr[X=x,A\cap B]/\Pr[A\cap B] \leqslant \Pr[X=x,B]/(\Pr[B]\cdot \Pr[A|B])$ for every $x$.
\end{proof}



\section{Main Result}\label{sec:main}

\subsection{Chain Rule}


\begin{theorem}[Exhibiting block structures with few bits spoiled]\label{thm:chain_rule}
Let $\mathcal{X}$ be a fixed alphabet and $X=(X_1,\ldots,X_t)$ be a sequence of (possibly correlated) random variables each over $\mathcal{X}$.
Then for any $1>\epsilon>0$ and $\delta>0$ there exists a collection $\mathcal{B}$ of disjoint sets on $\mathcal{X}^t$
such that 
\begin{enumerate}[(a)]
\item $\mathcal{B}$ can be indexed by a small number of bits, namely
\begin{align*}
\log |\mathcal{B}| = t\cdot O\left(\log\log|\mathcal{X}| +\log \log(\epsilon^{-1})+\log (t/\delta)\right)
\end{align*}
\item $\mathcal{B}$ almost covers the domain
\begin{align*}
 \sum_{B\in \mathcal{B}}p_X(B) \geqslant 1-\epsilon &
\end{align*}
\item Conditioned on members of $\mathcal{B}$, block distributions $X_i|X_{<i}$ are nearly flat.
\begin{align*}
\forall x,x'\in B:\quad 2^{-O(\delta)} \leqslant \frac{p_X(x_i|x_{<i})}{p_X(x'_i|x'_{<i})} \leqslant 2^{O(\delta)}.
\end{align*}
for every $B\in\mathcal{B}$ and $i=1,\ldots,t$.
\end{enumerate}
\end{theorem}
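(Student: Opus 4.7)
The plan is to build $\mathcal{B}$ as the non-empty level sets of the ``block surprise'' vector
\[
r(x) = (r_1(x),\ldots,r_t(x)), \qquad r_i(x) := -\log p_{X_i|X_{<i}}(x_i|x_{<i}),
\]
after discarding a set of $p_X$-mass at most $\epsilon$. The chain rule for probabilities, in logarithmic form, gives the key identity $\sum_{i=1}^{t} r_i(x) = -\log p_X(x)$, so on the region of non-negligible mass the vector $r(x)$ is confined to a bounded box in $\mathbb{R}_{\geq 0}^{t}$. Moreover, the flatness requirement (c) is precisely the statement $|r_i(x)-r_i(x')|\leq O(\delta)$ for every coordinate $i$, which is enforced as soon as $r(x)$ and $r(x')$ lie in the same $\ell_\infty$-cube of side $\delta$. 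Hence it is natural to let $\mathcal{B}$ be the non-empty preimages of a regular $\delta$-grid of cubes.

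To make this precise, I would first remove low-probability outcomes. Let $\mathcal{X}^t_{\mathrm{bad}} = \{x\in\mathcal{X}^t : p_X(x) < \epsilon\,|\mathcal{X}|^{-t}\}$; since $|\mathcal{X}^t| = |\mathcal{X}|^t$ a union bound yields $p_X(\mathcal{X}^t_{\mathrm{bad}}) < \epsilon$. For $x\notin \mathcal{X}^t_{\mathrm{bad}}$, non-negativity of the $r_j$ and the chain identity give the uniform pointwise bound
\[
0 \leq r_i(x) \leq -\log p_X(x) \leq R,\qquad R := t\log|\mathcal{X}| + \log(1/\epsilon),
\]
for every coordinate $i$. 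I then tile the box $[0,R)^t$ by the $\lceil R/\delta\rceil^{t}$ half-open cubes of the regular $\delta$-grid, and for each such cube $C$ define
\[
B_C := \{x\in\mathcal{X}^t\setminus\mathcal{X}^t_{\mathrm{bad}} : r(x)\in C\}.
\]
The family $\mathcal{B} := \{B_C : B_C\neq\emptyset\}$ is disjoint and covers $\mathcal{X}^t\setminus\mathcal{X}^t_{\mathrm{bad}}$, so (b) is immediate; for (c), two points $x,x'$ in the same $B_C$ satisfy $|r_i(x)-r_i(x')|<\delta$ for every $i$, whence $p_X(x_i\mid x_{<i})/p_X(x'_i\mid x'_{<i})\in [2^{-\delta},2^{\delta}]$.

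For (a), $|\mathcal{B}| \leq \lceil R/\delta\rceil^{t} \leq (2R/\delta)^{t}$, so using $\log(a+b)\leq 1+\log a+\log b$ for $a,b\geq 1$,
\[
\log|\mathcal{B}| \leq t\log\!\left(\frac{2(t\log|\mathcal{X}|+\log(1/\epsilon))}{\delta}\right) \leq t\cdot O\bigl(\log t+\log\log|\mathcal{X}|+\log\log(1/\epsilon)+\log(1/\delta)\bigr),
\]
which collapses to the stated $t\cdot O(\log\log|\mathcal{X}|+\log\log(1/\epsilon)+\log(t/\delta))$ via $\log t+\log(1/\delta)=\log(t/\delta)$. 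I do not anticipate any deep obstacle: the whole argument is an $\ell_\infty$ covering of a box in which $r(x)$ lives. The only piece of bookkeeping worth care is the probability cutoff $\epsilon\,|\mathcal{X}|^{-t}$, chosen so that the bad-set mass is $\leq \epsilon$ while the surprise range $R$ remains polynomial in $t$, $\log|\mathcal{X}|$, and $\log(1/\epsilon)$, so that one further logarithm yields exactly the $\log\log$-type terms claimed. A finer cover of the simplex $\{\sum r_i\leq R\}$ instead of the enclosing box would save a factor of order $t!$, but the crude box cover already matches the stated rate.
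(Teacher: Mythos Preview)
Your proposal is correct and follows essentially the same approach as the paper: define the block-surprise vector $r(x)$, use the chain identity $\sum_i r_i(x)=-\log p_X(x)$, discard outcomes with $p_X(x)<\epsilon|\mathcal{X}|^{-t}$ so that $r(x)$ lands in a bounded region, and take $\mathcal{B}$ to be the non-empty preimages of an $\ell_\infty$-cover of radius $\delta$. The only cosmetic difference is that the paper covers the simplex $\{\sum_i r_i\le R\}$ while you cover the enclosing box $[0,R)^t$; as you already note, this costs at most a $t!$ factor inside the log and does not affect the stated $t\cdot O(\log\log|\mathcal{X}|+\log\log(1/\epsilon)+\log(t/\delta))$ bound.
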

In some applications (see for example \Cref{sec:app}) it is convenient to work with parts that are not too small. 
By substituting $\epsilon:=\frac{\epsilon}{2}$, and deleting from $\mathcal{B}$
all members of smaller than $\frac{\epsilon}{2|\mathcal{B}|}$ (their mass is at most $\frac{\epsilon}{2}$), we obtain
\begin{remark}[Getting rid of tiny parts]\label{rem:fat_partition}
In \Cref{thm:chain_rule}, we may assume $p_X(B) = \Omega\left(\epsilon\cdot |\mathcal{B}|^{-1}\right)$ for every $B\in\mathcal{B}$.
\end{remark}

\begin{corollary}[Conditional entropies under few bits spoiled]\label{cor:chain_rule}
Under the assumptions of \Cref{thm:chain_rule}, for every $B\in\mathcal{B}$ for every index $i$ and for every 
set $I\subset \{1,\ldots,i-1\}$ we have
\begin{enumerate}[(a)]
 \item The chain rule for min-entropy
 \begin{align*}
 H_{\infty}(X_i|X_I,B) = H_{\infty}(X_i,X_I|B) - H_{\infty}(X_I|B)  \pm O(\delta).
\end{align*}
\item The average and worst-case min-entropy almost match
\begin{align*}
\widetilde{H}_{\infty}(X_i|X_I,B) = {H}_{\infty}(X_i|X_I,B)  \pm O(\delta).
\end{align*}
\end{enumerate}
\end{corollary}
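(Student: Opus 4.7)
The plan is to leverage \Cref{thm:chain_rule}(c) to show that $p_X$ restricted to each $B$ is nearly uniform, after which both parts (a) and (b) follow by counting arguments on $B$.

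First I would establish that $p_X$ is approximately constant on each $B$. For any $x, x' \in B$, the probability chain identity combined with the flatness of conditionals yields
\begin{align*}
 \frac{p_X(x)}{p_X(x')} \;=\; \prod_{j=1}^t \frac{p_X(x_j \mid x_{<j})}{p_X(x'_j \mid x'_{<j})} \;=\; 2^{\pm O(\delta)},
\end{align*}
where hidden constants absorb the factor $t$. Dividing by $p_X(B)$, the conditional distribution $p_B := p_X(\cdot \mid X \in B)$ is uniform on $B$ up to a multiplicative factor of $2^{O(\delta)}$.

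For part (a), the direction $H_{\infty}(X_i, X_I \mid B) \geq H_{\infty}(X_I \mid B) + H_{\infty}(X_i \mid X_I, B)$ is immediate from \Cref{lemma:chaining_entropy} applied after conditioning on the event $X \in B$. For the reverse, I would express each entropy as $-\log$ of a maximum of slice counts $N(\cdot) := |\{x \in B : \cdot\}|$ scaled by $|B|$, using near-uniformity. The target then reduces to the combinatorial estimate
\begin{align*}
 \max_{x_I, x_i} p_B(x_I, x_i) \;\geq\; 2^{-O(\delta)} \max_{x_I} p_B(x_I) \cdot \max_{x_I, x_i} p_B(x_i \mid x_I),
\end{align*}
which I would establish by iterating the flatness of individual conditionals along the chain. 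For part (b), the inequality $\widetilde{H}_{\infty} \geq H_{\infty}$ holds in general, and the reverse $\widetilde{H}_{\infty} \leq H_{\infty} + O(\delta)$ follows again from near-uniformity of $p_B$: the worst-case quantities $H_{\infty}(X_i \mid X_I = x_I, B)$ vary by at most $O(\delta)$ across $x_I$ in the support, so the log-expectation defining $\widetilde{H}_{\infty}$ essentially equals the worst case.

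The main obstacle I expect is the reverse chain-rule inequality, since even exact uniformity of $p_B$ on an arbitrary set $B$ does not imply the needed combinatorial estimate on slice counts. The crucial input will be the structure of each part $B$ as a level set of the vector $(r_1(x), \ldots, r_t(x))$ with $r_j(x) = -\log p_X(x_j \mid x_{<j})$, which forces the slice counts $N(\cdot)$ to be approximately determined by the fixed levels $(c_1, \ldots, c_t)$ rather than by the particular choice of $x_I$, giving the product-like decomposition that makes the three maxima decouple.
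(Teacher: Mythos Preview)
Your route is more roundabout than the paper's and incurs an avoidable loss. The paper never establishes global near-uniformity of $p_B$ and never passes to slice counts. It uses only the flatness of the \emph{single} conditional $p_X(x_i\mid x_{<i})$ on $B$ from \Cref{thm:chain_rule}(c), writes
\[
\Pr[X_{\leqslant i}=x_{\leqslant i}\mid B]\;=\;2^{\pm O(\delta)}\cdot\Pr[X_{<i}=x_{<i}\mid B]
\]
(up to a constant depending only on $B,i$), and then simply \emph{sums this relation over the unwanted coordinates} $x_J$ with $J=\{1,\ldots,i-1\}\setminus I$. That marginalization immediately gives
\[
\Pr[X_i=x_i,\,X_I=x_I\mid B]\;=\;2^{\pm O(\delta)}\cdot\Pr[X_I=x_I\mid B],
\]
i.e.\ $p_B(x_i\mid x_I)$ is constant up to $2^{\pm O(\delta)}$, from which both (a) and (b) drop out in one line. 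The ``obstacle'' you flag---that uniformity of $p_B$ on an arbitrary $B$ does not decouple the maxima---never arises, and no separate appeal to the level-set geometry beyond what is already packaged in part~(c) is needed.

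Two concrete issues with your version. First, multiplying all $t$ conditional flatness bounds turns the error into $O(t\delta)$, not $O(\delta)$; your remark that ``hidden constants absorb the factor~$t$'' is inconsistent with the corollary as stated. Second, your final claim that the level-set structure makes the absolute slice counts $N(x_I)$ ``approximately determined by the fixed levels'' is not correct: $N(x_I)$ can genuinely vary with $x_I$; only the \emph{ratios} $N(x_I,x_i)/N(x_I)$ are pinned down. That is exactly what the paper's one-line marginalization delivers directly, without the detour through global uniformity.
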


\begin{proof}[Proof of \Cref{cor:chain_rule}]
Fix any subset $I\subset \{1,\ldots,i-1\}$ of size $m$. By \Cref{thm:chain_rule} for every $x$ and $B$ we have
\begin{align}\label{cor:1}
 \Pr[X_{\leqslant i} = x_{\leqslant i}  |B ] =  2^{\pm O(\delta)}\Pr[X_{< i} = x_{< i} |B ].
\end{align}
Let $J = \{1,\ldots,i-1\}\setminus I$. Taking the sum of \Cref{cor:1} over
$x$ such that $x_{I} = x'_{I}$ and $x_i = x''$ are fixed but $x_{J}$ varies we obtain
\begin{align*}
 \Pr[X_i=x'',X_{I} = x'_{I}  |B ] =  2^{\pm O(\delta)}\Pr[X_{I} = x'_{I} |B ].
\end{align*}
This implies
\begin{align*}
H_{\infty}(X_i|X_I,B) = H_{\infty}(X_i,X_I,B) - H_{\infty}(X_I,B)  \pm O(\delta) \\
\widetilde{H}_{\infty}(X_i|X_I,B) = {H}_{\infty}(X_i|X_I,B)   \pm O(\delta) 
\end{align*}
which finishes the proof.
\end{proof}


\begin{proof}[Proof of \Cref{thm:chain_rule}]
Let $p$ be the joint distribution of $X_1,\ldots,X_n$.
For any $i$ denote 
\begin{align}\label{eq:decomposition}
p_i(x_i|x_{i-1},\ldots,x_1) = p_{X_i|X_{<i}}(x_i,\ldots,x_1).
\end{align}
and let the ``surprise'' of the bit $x_i$ be
\begin{align*}
 r^i(x) = -\log p_i(x_i,\ldots,x_1).
\end{align*}
Note that
$p(x_1,\ldots,x_n) = \prod_{i=1}^{n} p_i(x_i,x_{i-1},\ldots,x_1) 
$ and therefore, denoting $x=(x_1,\ldots,x_n)$, we obtain
\begin{claim}[Decomposing surprises]
We have
\begin{align}\label{eq:cut}
 \sum_{i=1}^{t}r^i(x) = -\log p(x).
\end{align}
\end{claim}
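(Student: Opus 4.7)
The claim is essentially a restatement of the chain rule of probability in logarithmic form, so the proof plan is very short. The plan is to start from the standard factorization of a joint distribution into conditional ones and then pass to logarithms; the definition of $r^i$ then matches each summand term by term.

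First, I would note that by the definition of conditional probability, for any joint distribution we have the telescoping identity
\begin{align*}
p(x_1,\ldots,x_t) \;=\; \prod_{i=1}^{t} p_{X_i \mid X_{<i}}(x_i,x_{i-1},\ldots,x_1),
\end{align*}
which holds pointwise on the support of $p$ and with the usual convention that the $i=1$ factor is simply the marginal $p_{X_1}(x_1)$. This is literally the decomposition \eqref{eq:decomposition} introduced just above the claim, iterated over $i=1,\ldots,t$.

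Next, I would apply $-\log(\cdot)$ to both sides, using the fact that the logarithm turns the product into a sum. This gives
\begin{align*}
-\log p(x_1,\ldots,x_t) \;=\; \sum_{i=1}^{t} \bigl(-\log p_i(x_i,x_{i-1},\ldots,x_1)\bigr) \;=\; \sum_{i=1}^{t} r^i(x),
\end{align*}
where the last equality is just the definition of $r^i(x)$ given immediately before the claim. This proves \eqref{eq:cut}.

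There is really no obstacle here, since the statement is a direct rewriting of the probabilistic chain rule in information-theoretic language. The only minor care needed is notational: the decomposition should be read for every $x$ in the support of $p$ (elsewhere both sides are $-\infty$ with the standard $-\log 0 = \infty$ convention, and one typically restricts attention to the support throughout the rest of the argument, which is consistent with the covering/simplex interpretation the authors use next).
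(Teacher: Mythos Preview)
Your proposal is correct and follows exactly the paper's approach: the authors simply note the factorization $p(x_1,\ldots,x_t)=\prod_i p_i(x_i,\ldots,x_1)$ immediately before the claim and then (implicitly) take $-\log$ to obtain \eqref{eq:cut}. You have spelled out the same two steps with a bit more detail, but there is no methodological difference.
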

The next claim follows by a simple Markov-type argument.
\begin{claim}[Significant probabilities]\label{claim:significant}
There exists a set $A\subset \mathcal{X}^t$ of
probability $1-\epsilon$ such that $p(x) \geqslant \frac{\epsilon}{|\mathcal{X}|^t}$ 
for all $x\in A$.
\end{claim}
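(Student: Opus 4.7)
The plan is to define $A$ explicitly as the set of $x$ where the joint probability $p(x)$ is at least the threshold $\epsilon / |\mathcal{X}|^t$, and then to control the complement by a straightforward counting argument (i.e., essentially a union bound over all outcomes in $\mathcal{X}^t$). There is no real obstacle here; the claim is purely quantitative bookkeeping, and I expect the proof to be a few lines.

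First, I would set
\[
A \;=\; \bigl\{ x \in \mathcal{X}^t \;:\; p(x) \geqslant \tfrac{\epsilon}{|\mathcal{X}|^t} \bigr\},
\]
so that, by construction, the second conclusion (the pointwise lower bound on $p(x)$ for $x \in A$) holds automatically. It remains only to verify the probability bound $p_X(A) \geqslant 1 - \epsilon$.

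For this, I would consider the complement $A^c = \{ x : p(x) < \epsilon / |\mathcal{X}|^t \}$ and estimate
\[
p_X(A^c) \;=\; \sum_{x \in A^c} p(x) \;<\; |A^c| \cdot \frac{\epsilon}{|\mathcal{X}|^t} \;\leqslant\; |\mathcal{X}|^t \cdot \frac{\epsilon}{|\mathcal{X}|^t} \;=\; \epsilon,
\]
using only that $A^c \subseteq \mathcal{X}^t$. Taking complements gives $p_X(A) \geqslant 1 - \epsilon$, which finishes the claim. The main (only) subtlety is noting that this crude counting bound is exactly what the authors refer to as the ``Markov-type argument'': the contribution of any single low-probability outcome is dwarfed by the size of the domain, and the total mass of all such outcomes is controlled by multiplying the threshold by $|\mathcal{X}|^t$.
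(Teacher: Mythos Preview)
Your proof is correct and is exactly the ``Markov-type argument'' the paper alludes to (the paper does not spell out the proof at all). Defining $A$ as the super-level set and bounding $p_X(A^c)$ by $|\mathcal{X}|^t \cdot \epsilon/|\mathcal{X}|^t$ is precisely what is intended.
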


Denoting $x=x_1,\ldots,x_n$, we have
\begin{align}
\forall x\in A:\quad -\log p(x_1,\ldots,x_n) \leqslant t\log|\mathcal{X}| + \log(1/\epsilon).
\end{align}
The claim below follows from \Cref{eq:decomposition} and \Cref{eq:cut}
\begin{claim}[Surprises live in a simplex]
We have $r^i(x) \geqslant 0$ for $i=1,\ldots, t$ and $\sum_{i=1^{t}} r^i(x) \leqslant t\log|\mathcal{X}| + \log(1/\epsilon)$,
for all points $x\in A$.
\end{claim}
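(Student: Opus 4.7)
The plan is straightforward: the statement bundles two assertions, a pointwise nonnegativity of each coordinate surprise and a uniform upper bound on their sum over the ``significant'' set $A$, and both follow by directly combining the two preceding claims with elementary properties of the logarithm.

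First I would establish the nonnegativity $r^i(x)\geqslant 0$. By definition $r^i(x)=-\log p_i(x_i\mid x_{i-1},\ldots,x_1)$, and $p_i$ is a conditional probability mass, so $p_i(x_i\mid x_{i-1},\ldots,x_1)\in[0,1]$. Taking $-\log$ of a number in $(0,1]$ yields a nonnegative value. The only edge case is when the conditional probability vanishes, but this happens only on a $p$-null set and in particular cannot occur for $x\in A$, since on $A$ we have $p(x)>0$ and the product decomposition forces every factor $p_i$ to be strictly positive. This disposes of (a).

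For the sum bound, I would just chain the two claims already proved. The ``Decomposing surprises'' claim gives the exact identity
\begin{align*}
\sum_{i=1}^{t} r^i(x) = -\log p(x),
\end{align*}
which is valid for every $x$ in the support. The ``Significant probabilities'' claim gives, for $x\in A$,
\begin{align*}
p(x)\geqslant \frac{\epsilon}{|\mathcal{X}|^t},
\end{align*}
and applying $-\log$ (monotone decreasing) yields
\begin{align*}
-\log p(x)\leqslant t\log|\mathcal{X}|+\log(1/\epsilon).
\end{align*}
Combining the two displayed lines proves (b).

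There is essentially no obstacle here; the claim is a direct bookkeeping consequence of the chaining identity and the cutoff defining $A$. The only thing worth flagging is that the inequality is stated for all $x\in A$ rather than almost everywhere, which is why one needs the deterministic lower bound on $p(x)$ supplied by the ``Significant probabilities'' claim rather than an expectation argument; this is the role played by conditioning on $A$ in the first place.
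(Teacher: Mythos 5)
Your proof is correct and matches the paper's argument: nonnegativity of each $r^i$ follows because each $p_i$ is a conditional probability mass, and the sum bound follows by combining the chaining identity $\sum_i r^i(x)=-\log p(x)$ with the lower bound $p(x)\geqslant \epsilon/|\mathcal{X}|^t$ on $A$ from the Significant Probabilities claim. The paper states this more tersely (citing the decomposition and chaining identities and the display just above the claim), but the underlying reasoning is identical.
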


\begin{claim}[Simplex coverings imply a chain rule]\label{clam:coverings_to_chainrule}
If the simplex with side length $t\log|\mathcal{X}| + \log(1/\epsilon)$ can be covered 
by $N$ balls of radius $R$ in the $\ell_{\infty}$ norm, then the theorem holds with $|\mathcal{B}| =  N$ and $\delta =  R$.
\end{claim}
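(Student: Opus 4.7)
The plan is to use the covering directly: for each point $x$ in the ``good'' set $A$ from \Cref{claim:significant}, the preceding claim places its surprise vector $r(x)=(r^1(x),\ldots,r^t(x))$ inside the simplex of side $t\log|\mathcal{X}| + \log(1/\epsilon)$, so if $C_1,\ldots,C_N$ are the $\ell_\infty$-balls of the hypothesized covering, then $r(x)$ lies in at least one $C_j$. I would define the partition by sending each $x\in A$ to the smallest index $j$ with $r(x)\in C_j$.

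Concretely, let $A_j$ be the set of all $x\in A$ mapped to index $j$ under this rule, and take $\mathcal{B} = \{A_j : A_j\neq\emptyset\}$. Property (a) is then immediate: $|\mathcal{B}|\leqslant N$, so a part can be indexed by $\lceil\log N\rceil$ bits. Property (b) follows since the $A_j$ partition $A$ and $p_X(A)\geqslant 1-\epsilon$ by \Cref{claim:significant}.

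The heart of the argument is property (c). For any $x,x'$ in the same part $A_j$, both surprise vectors $r(x)$ and $r(x')$ lie in the $\ell_\infty$-ball $C_j$ of radius $R$, so the triangle inequality through its center yields $|r^i(x)-r^i(x')|\leqslant 2R$ in every coordinate $i$. Unfolding $r^i(x)=-\log p_X(x_i|x_{<i})$, this gives $p_X(x_i|x_{<i})/p_X(x'_i|x'_{<i}) = 2^{\pm 2R}$, which is exactly the flatness statement with $\delta = R$ after absorbing the factor of two into the $O(\cdot)$.

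I do not expect a substantial obstacle inside the claim itself, since assigning points to their covering ball is the natural move and all three properties fall out of the definitions. The nontrivial work is pushed to the next step, namely producing a concrete $\ell_\infty$-covering of the simplex with $\log N = t\cdot O(\log\log|\mathcal{X}| + \log\log(\epsilon^{-1}) + \log(t/\delta))$; even the crude approach of tiling the enclosing box $[0,\,t\log|\mathcal{X}|+\log(1/\epsilon)]^t$ by axis-aligned cubes of side $2\delta$ yields $N \leqslant \lceil (t\log|\mathcal{X}|+\log(1/\epsilon))/(2\delta)\rceil^t$, whose logarithm matches the target after a routine simplification, and this is where I would focus care in the full write-up.
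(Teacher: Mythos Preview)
Your proof is correct and follows essentially the same route as the paper: partition $A$ according to covering-ball membership, then bound $|r^i(x)-r^i(x')|\le 2R$ for $x,x'$ in the same part via the triangle inequality through the ball's center, and unfold the definition of $r^i$ to obtain property~(c). The only cosmetic difference is the assignment rule---you send $x$ to the ball of smallest index containing $r(x)$, while the paper sends it to the nearest center---and both work equally well.
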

\begin{proof}[Proof of \Cref{clam:coverings_to_chainrule}]
Let $C\subset \mathbb{R}^{t}$, $|C|=N$, be the set of the centers of the covering balls.
Let $S$ be the function which assigns to every point $x\in A$ (where $A$ is defined in \Cref{claim:significant}) the point $z\in C$ 
closest to the vector $(r_1(x),\ldots,r_t(x))$ in the $\ell_{\infty}$-norm.
Fix any $z$ and let $B_z = \{x: S(x)=z\}$. By the properties of the covering, for $i=1,\ldots,t$ we obtain
\begin{align}\label{eq:surprises_concentrated}
\forall x\in B: \quad \left|r^{i}(x)-z_i\right| \leqslant R
\end{align}
In particular, the surprises for any two points in $B$ are close
\begin{align}\label{eq:surprises_concentrated2}
 \forall x,x'\in B: \quad \left|r^{i}(x) -r^{i}(x')\right| \leqslant 2R, \quad i=1,\ldots,t.
\end{align}
Let $p_B$ be the conditional probability of $p$ given $B$. Denote by 
$r^{i}_B$  the surprise of the $i$-th bit given previous bits and conditioned on $B$, that is
\begin{align*}
r^{i}_B(x) &= -\log\Pr[X_i=x_i|X_{i-1}=x_{i-1},\ldots,X_1=x_1,B] \\
& = -\log p_B(x_i | x_{i-1},\ldots,x_1)
\end{align*}
Note that $p_B(x) = p(x)/p(B)$ for $x\in B$, and hence
$r^{i}_B(x) = r^{i}(x) + \log (p(B)^{-1})$.
Now \Cref{eq:surprises_concentrated2} implies
\begin{align}\label{eq:surprises_concentrated2_cond}
 \forall x,x'\in B: \quad \left|r^{i}_B(x) -r^{i}_B(x')\right| \leqslant 2R, \quad i=1,\ldots,t.
\end{align}
which finishes the proof.
\end{proof}
It remains to observe that the covering number for our case is
$\log N = \log N_0 + \log\log(|\mathcal{X}|)+\log\log(1/\epsilon))+\log (t/\delta)$ (see for example \cite{simplex})
which finishes the proof.
\end{proof}

\section{Applications}\label{sec:app}

\begin{theorem}[Sampling preserves entropy rate~\cite{DBLP:conf/crypto/Vadhan03}]\label{lemma:sampling}
Let $\mathcal{X}$ be a fixed finite alphabet, and
let $X_1,\ldots,X_t$ be a sequence of correlated random variables each over $\mathcal{X}$. 
Let $i_1,\ldots,i_{\ell} \in [1,t]$, where $\ell < t$, be chosen from the set $\{1,\ldots,t\}$ by an averaging $(\mu,\theta,\gamma)$-sampler. Then there is a random variable $\mathcal{B}$ 
taking $m = t\cdot O(\log\log|\mathcal{X}|+\log\log(1/\epsilon)+\log t)$ bits, 
such that
\begin{multline*}
\frac{1}{\ell\log|\mathcal{X}|}\widetilde{H}^{\epsilon}_{\infty}(X_{i_{\ell}}X_{i_{\ell-1}}X_{i_{\ell-2}}\ldots,X_{i_1}|\mathcal{B}) \geqslant \\
\frac{1}{t\log|\mathcal{X}|}\widetilde{H}_{\infty}(X_{i_{\ell}}X_{i_{\ell-1}}X_{i_{\ell-2}}\ldots,X_{i_1})-err_{\mathsf{Spoil}}-err_{\mathsf{Sampler}}
\end{multline*}
where the errors due to spoiling and sampling equal
\begin{align*}
err_{\mathsf{Spoil}} &= O(\log\log|\mathcal{X}|+\log\log(1/\epsilon)+\log t)/\log|\mathcal{X}| \\
err_{\mathsf{Samp}} &= O(\sqrt{\ell^{-1}\log(1/\epsilon)}).
\end{align*}
\end{theorem}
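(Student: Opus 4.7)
The plan is to combine the block-structure theorem with the chain rules it enables and an averaging sampler: first exhibit a flat block structure via spoiling, then write the joint min-entropy as a sum of per-block entropies, apply the sampler to this sum, and chain back to recover the joint min-entropy of the sampled blocks.

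First, I would invoke \Cref{thm:chain_rule} (combined with \Cref{rem:fat_partition}) on $X_1,\ldots,X_t$ with a constant flatness parameter $\delta$ and confidence $\epsilon$, obtaining a partition $\mathcal{B}$ indexed by $m = t\cdot O(\log\log|\mathcal{X}|+\log\log(1/\epsilon)+\log t)$ bits on which \Cref{cor:chain_rule} gives both the exact chain rule and the worst-case/average equivalence, all up to $O(\delta)$. Inside each part $B$, define the per-block ``rate'' $f_B(i) = H_{\infty}(X_i\mid X_{<i},B)/\log|\mathcal{X}|\in[0,1]$. Telescoping the chain rule then yields
\begin{align*}
H_{\infty}(X_1\ldots X_t\mid B) \;=\; \log|\mathcal{X}|\sum_{i=1}^{t} f_B(i) \pm O(t\delta),
\end{align*}
so $\mu(B):=\frac{1}{t}\sum_i f_B(i)$ really is the conditional rate of $X$ given $B$.

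Next, I would apply the averaging sampler to the function $f_B$: with probability $\geqslant 1-\gamma$ over the sampler's coins, $\frac{1}{\ell}\sum_j f_B(i_j)\geqslant \mu(B)-\theta$, and I would choose $\theta=err_{\mathsf{Samp}}$ and $\gamma$ of order $\epsilon$ to match the claimed sampler error. To convert the sampled sum back into joint min-entropy, I apply \Cref{lemma:chaining_entropy} iteratively on the sampled subsequence and then use \Cref{lemma:conditioning_reduces_entropy} (valid in the worst case because \Cref{cor:chain_rule} has identified average and worst-case min-entropy within $B$) to show
\begin{align*}
 H_{\infty}(X_{i_j}\mid X_{i_1},\ldots,X_{i_{j-1}},B) \;\geqslant\; H_{\infty}(X_{i_j}\mid X_{<i_j},B) - O(\delta),
\end{align*}
giving $H_{\infty}(X_{i_1},\ldots,X_{i_\ell}\mid B)\geqslant \ell(\mu(B)-\theta)\log|\mathcal{X}| - O(\ell\delta)$ whenever the sampler succeeds.

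Finally, I would pass from these per-$B$ statements to the global bound. The leakage lemma (\Cref{lemma:leakage}), averaged over $\mathcal{B}$, yields $\widetilde{H}_{\infty}(X_1\ldots X_t\mid\mathcal{B})\geqslant \widetilde{H}_{\infty}(X_1\ldots X_t) - m$, which is exactly the $err_{\mathsf{Spoil}}$ loss once divided by $t\log|\mathcal{X}|$, and it lets me replace the pointwise $\mu(B)$ by the global rate $\mu=\widetilde{H}_{\infty}(X)/(t\log|\mathcal{X}|)$ in the averaged expression $-\log\mathbb{E}_B 2^{-H_{\infty}(\cdots\mid B)}$. The $\epsilon$-smoothing on the LHS absorbs both the $\leqslant \epsilon$ mass of $X$ lying outside $\bigcup\mathcal{B}$ and the $\leqslant\gamma$ probability of sampler failure. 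The main obstacle I expect is precisely this last step: the sampler guarantees a per-$B$ bound with high probability over sampler coins, but $\widetilde{H}_{\infty}^{\epsilon}$ is the $-\log\mathbb{E}$ of $2^{-H_\infty}$ over $\mathcal{B}$ and the sampler randomness simultaneously, so one must carefully discard a single joint bad set of measure at most $\epsilon$ rather than double-counting losses. Once the combinatorial partition from \Cref{thm:chain_rule} is in hand, everything else is a clean splitting-recombining argument driven by \Cref{cor:chain_rule}.
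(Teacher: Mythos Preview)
Your proposal is correct and follows essentially the same route as the paper: invoke \Cref{thm:chain_rule}, split the joint entropy into per-block terms via \Cref{cor:chain_rule}, apply the sampler to those terms, use the worst/average equivalence plus \Cref{lemma:conditioning_reduces_entropy} to pass from $X_{<i_j}$ to the sampled prefix, and recombine via \Cref{lemma:chaining_entropy}. The only place the paper is more explicit than you is the final averaging over $\mathcal{B}$: rather than invoking the leakage lemma directly, it applies Jensen's inequality to $u\mapsto u^{\ell/t}$ (concave since $\ell<t$) to turn the per-$B$ bound $2^{-H_{\infty}(\text{sampled}\mid B)}\leqslant 2^{-(\ell/t)H_{\infty}(X\mid B)}$ into $\widetilde{H}_{\infty}(\text{sampled}\mid\mathcal{B})\geqslant (\ell/t)\,\widetilde{H}_{\infty}(X\mid\mathcal{B})$, and only then subtracts $\log|\mathcal{B}|$; your ``replace $\mu(B)$ by the global rate'' step is exactly this Jensen move, so you may want to name it.
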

\begin{remark}[Local extractors]
Composing this with an extractor over $\mathcal{X}^{\ell}$ one obtains a \emph{local extractor}, which 
reads only a small fraction (specifically $\frac{\ell}{t}$) of input bits. We refer to~\cite{DBLP:conf/crypto/Vadhan03}
for a general discussion.
\end{remark}

\begin{proof}
We will argue that the sequence $X_{i_1},\ldots,X_{i_{\ell}}$, for $\ell$ sufficiently big, likely
has the same entropy rate (entropy per block) as the original sequence $X_1,\ldots,X_{\ell}$.

Let $\mathcal{B}$ be the family guaranteed by \Cref{thm:chain_rule}. By part (a) of \Cref{cor:chain_rule} applied $t$ times
(starting from $i=t$ downto $i=1$) we have
for every $B\in\mathcal{B}$
\begin{align*}
 \sum_{i=1}^{t}{H}_{\infty}(X_i|X_{<i},B) \geqslant H_{\infty}(X|B) - O(t\delta)
\end{align*}
and now by part (b) applied to each summand
\begin{align}\label{eq:app_1}
 \sum_{i=1}^{t}\widetilde{H}_{\infty}(X_i|X_{<i},B) \geqslant H_{\infty}(X|B) - O(t\delta).
\end{align}
Note that
\begin{align*}
 \mathbb{E}_{i_1,\ldots,i_{\ell}}\sum_{j=1}^{\ell}\widetilde{H}_{\infty}(X_{i_j}|X_{<i_{j}},B) = 
 \frac{1}{t}\sum_{i=1}^{t}\widetilde{H}_{\infty}(X_i|X_{<i},B)
\end{align*}
In particular, with high probability over $(i_j)_{j=1,\ldots,\ell}$ 
\begin{align}\label{eq:app_2}
\frac{1}{\ell} \sum_{j=1}^{\ell}\widetilde{H}_{\infty}(X_{i_j}|X_{<i_{j}},B) \gtrsim \frac{1}{t}\sum_{i=1}^{t}H_{\infty}(X_i|X_{<i},B).
\end{align}
For the sake of clarity, we comment later on the exact error in accuracy and probability in \Cref{eq:app_2}.
Observe that by part (b) of \Cref{cor:chain_rule} we obtain
\begin{multline}\label{eq:app_3}
\frac{1}{\ell} \sum_{j=1}^{\ell}\widetilde{H}_{\infty}(X_{i_j}|X_{i_{j}}X_{i_{j-1}}\ldots,X_{i_1}|B)  \geqslant \\  \frac{1}{\ell} \sum_{j=1}^{\ell}\widetilde{H}_{\infty}(X_{i_j}|X_{<i_{j}}|B) 
 -  O(\delta)
\end{multline}
(which is the \emph{conditioning reduces entropy} property). Again, by applying part (b) of \Cref{cor:chain_rule} to the sum on the right-hand side of \Cref{eq:app_3} we get
\begin{multline}\label{eq:app_4}
\frac{1}{\ell} \sum_{j=1}^{\ell}{H}_{\infty}(X_{i_j}|X_{i_{j}}X_{i_{j-1}}\ldots,X_{i_1}|B)  \geqslant \\
\frac{1}{\ell} \sum_{j=1}^{\ell}\widetilde{H}_{\infty}(X_{i_j}|X_{i_{j}}X_{i_{j-1}}\ldots,X_{i_1}|B)  - O(\delta)
\end{multline}
By \Cref{lemma:chaining_entropy} from \Cref{eq:app_3} we get for every $B'\in\mathcal{B}'$
\begin{multline}\label{eq:app_5}
\frac{1}{\ell}{H}_{\infty}(X_{i_{\ell}}X_{i_{\ell-1}}X_{i_{\ell-2}}\ldots,X_{i_1}|B) \geqslant \\
\frac{1}{\ell} \sum_{j=1}^{\ell}\widetilde{H}_{\infty}(X_{i_j}|X_{i_{j}}X_{i_{j-1}}\ldots,X_{i_1}|B)  - O(\delta).
\end{multline}
Combining this with \Cref{eq:app_3},\Cref{eq:app_2} and \Cref{eq:app_1} we finally obtain (with high probability)
\begin{multline}\label{eq:app_6}
\frac{1}{\ell} {H}_{\infty}(X_{i_{\ell}}X_{i_{\ell-1}}X_{i_{\ell-2}}\ldots,X_{i_1}|B) \gtrsim \\ \frac{1}{t}H_{\infty}(X|B) - O(\delta).
\end{multline}
Note that this holds for every $B$.
Recall that by \Cref{rem:fat_partition} we can assume $\Pr[B] = \Omega(|\mathcal{B}|^{-1}\epsilon)$. Now by \Cref{lemma:leakage}
we have $H_{\infty}(X|B) > H_{\infty}(X)-O(\log|\mathcal{B}| + \log(1/\epsilon))$
and thus
\begin{multline*}
\frac{1}{\ell} {H}_{\infty}(X_{i_{\ell}}X_{i_{\ell-1}}X_{i_{\ell-2}}\ldots,X_{i_1}|B) \gtrsim \\
\frac{1}{t}H_{\infty}(X|B) -\frac{O(\log|\mathcal{B}| + \log(1/\epsilon))}{t}.
\end{multline*}
We can do slightly better. Namely, from \Cref{eq:app_6} 
\begin{multline}\label{eq:app_7}
\frac{1}{\ell}\widetilde{{H}}_{\infty}(X_{i_{\ell}}X_{i_{\ell-1}}X_{i_{\ell-2}}\ldots,X_{i_1}|\mathcal{B}) \gtrsim \\
\frac{1}{t}H_{\infty}(X|\mathcal{B})  - O(\delta).
\end{multline}
where $\mathcal{B}$ is a random variable that assigns to every point $x$ the corresponding set $B$ covering $x$, conditioned
in addition on the map being defined (it fails when $x$ is not covered by any $B$ which happens w.p. at most $\epsilon$)
Indeed, we have
\begin{multline*}
2^{-H_{\infty}(X_{i_{\ell}}X_{i_{\ell-1}}X_{i_{\ell-2}}\ldots,X_{i_1}|B)} \leqslant \\
 \left( 2^{-H_{\infty}(X_{i_{t}}X_{i_{t-1}}X_{i_{\ell-2}}\ldots,X_{i_1})} \right)^{\frac{\ell}{t}}.
\end{multline*}
and \Cref{eq:app_7} follows by averaging over $B$ and the Jensen Inequality (note that $\ell < t$ implies that the corresponding mapping is concave).
By \Cref{lemma:chaining_entropy} applied to $X|\mathcal{B}=B$ for all possible $B$ to outcomes of $\mathcal{B}$ we obtain
\begin{multline}\label{eq:app_7}
\frac{1}{\ell}\widetilde{{H}}_{\infty}(X_{i_{\ell}}X_{i_{\ell-1}}X_{i_{\ell-2}}\ldots,X_{i_1}|\mathcal{B}) \gtrsim \\
\frac{1}{t}H_{\infty}(X) -\frac{\log |\mathcal{B}|}{t} - O(\delta)
\end{multline}
It remains to use an explicit bound on $|\mathcal{B}|$ from \Cref{thm:chain_rule}, set the sampler to $\gamma = \epsilon$ and compute $\theta$ from $\epsilon$ and $\ell$.
\end{proof}

\section{Conclusion}\label{sec:conclusion} 
By a simple combinatorial argument combined with the spoiling knowledge technique we showed how to exhibits strong block-source structures
in any min-entropy source. This approach may be applied to locally enforce chain rules (or other desired properties) for min-entropy.






%

\printbibliography

\end{document}